\theoremstyle{plain}
\newtheorem{theorem}{Theorem}[section]
\newtheorem{lem}[theorem]{Lemma}
\theoremstyle{definition}
\newtheorem{defn}{Definition}[section]
\theoremstyle{remark}
\newtheorem*{rem}{Remark}
\title{An Interpretation of the Moore-Penrose Generalized Inverse of a Singular Fisher Information Matrix}
\author{Yen-Huan~Li,~\IEEEmembership{Member,~IEEE,}
        Ping-Cheng~Yeh,~\IEEEmembership{Member,~IEEE,}
\thanks{\copyright 2012 IEEE. Personal use of this material is permitted. Permission from IEEE must be obtained for all other uses, in any current or future media, including reprinting/republishing this material for advertising or promotional purposes, creating new collective works, for resale or redistribution to servers or lists, or reuse of any copyrighted component of this work in other works.}
\thanks{Y.-H. Li is with the Research Center for Information Technology Innovation, Academia Sinica, Taipei, Taiwan (e-mail: yenhuan.li@gmail.com).}
\thanks{P.-C. Yeh is with the Department of Electrical Engineering and the Graduate Institute of Communication Engineering, National Taiwan University, Taipei, Taiwan (e-mail: pcyeh@cc.ee.ntu.edu.tw).}}
\date{July 1, 2012}
\begin{document}
\maketitle
\begin{abstract}
It is proved that in a non-Bayesian parametric estimation problem, if the Fisher information matrix (FIM) is singular, unbiased estimators for the unknown parameter will not exist. Cram\'{e}r-Rao bound (CRB), a popular tool to lower bound the variances of unbiased estimators, seems inapplicable in such situations. In this paper, we show that the Moore-Penrose generalized inverse of a singular FIM can be interpreted as the CRB corresponding to the minimum variance among all choices of minimum constraint functions. This result ensures the logical validity of applying the Moore-Penrose generalized inverse of an FIM as the covariance lower bound when the FIM is singular. Furthermore, the result can be applied as a performance bound on the joint design of constraint functions and unbiased estimators.
\end{abstract}
\begin{IEEEkeywords}
Constrained parameters, Cram\'{e}r-Rao bound (CRB), singular Fisher information matrix (FIM).
\end{IEEEkeywords}

\section{Introduction} \label{section_intro}
An interpretation of the Moore-Penrose generalized inverse \cite{Horn1985} of a singular Fisher information matrix (FIM) is presented in this paper, from the perspective of Cram\'{e}r-Rao bound (CRB). CRB is a lower bound on the covariance matrix of any unbiased estimator in a non-Bayesian parametric estimation problem \cite{Cramer1999,Rao1945}, and is a popular tool to evaluate the optimal mean-square error (MSE) performance of estimators in various applications \cite{Kay1993,VanTrees2001}. The most general form of CRB says that the covariance matrix of any unbiased estimator is lower bounded by the generalized inverse of the Fisher information matrix \cite{Rao1973}. This general form of CRB holds for both singular and non-singular FIMs.

There are, however, facts in literature that render the application of CRB questionable when the FIM is singular. Rothenberg proves in \cite{Rothenberg1971} that under some regularity conditions, the non-singularity of the FIM is equivalent to the local identifiability of the parameter to be estimated\footnote{A parameter $\bm{\theta}$ is locally identifiable if there exists an open neighbourhood $\bm{\Theta}$ of $\bm{\theta}$ such that no other $\bm{\theta}' \in \bm{\Theta}$ is observationally equivalent to $\bm{\theta}$.}; Stoica \textit{et al.} prove in \cite{Stoica1982,Stoica1998} that unbiased estimators with finite variances do not exist when the FIM is singular, except for some ``unusual'' conditions\footnote{The ``unusual'' conditions suggest that if the FIM is singular, only unbiased estimators for some functions of the unknown parameter, instead of the unknown parameter itself, may exist with finite variances.}. If the parameter to be estimated is locally non-identifiable, or all of the unbiased estimators will have infinite variances, it seems meaningless to discuss the performances of unbiased estimators.

As mentioned in \cite{Stoica2001}, one may change the nature of an estimation problem to allow the existence of an estimator with finite variance. There are three approaches. The first approach is to introduce \textit{a priori} information about the probability distribution of the parameter to be estimated; in this way the estimation problem becomes a Bayesian one. There are abundant literature on Bayesian statistics \cite{Berger1985} and performance bounds \cite{VanTrees2007}. \textit{A priori} information about the probability distribution of the unknown parameter, however, is not always already known. The second approach is to consider \emph{biased} estimators instead of \emph{unbiased} estimators. In \cite{Stoica2001}, the necessary condition for the bias function to ensure the existence of an unbiased estimator with finite variance is derived. The authors of a recent paper derive the bias function that leads to the minimum trace of the resulting CRB, a lower bound on the total variances of estimators \cite{Song2009}. There are a number of situations, however, where biased estimators are not preferred. For example, almost all estimation problems encountered in the design of a communication system, including the estimation of carrier phases and symbol timing for synchronization, the estimation of channel responses for equalization, etc., require unbiased estimators. The third approach is to put or to exploit some deterministic constraints on the parameter to be estimated. The deterministic constraints result in a parametric estimation problem with reduced dimension, where an unbiased estimator with finite variance may exist. We focus on the third approach in this paper.

Take blind channel estimation problems for example \cite{Tong1998}. The goal of blind channel estimation is to estimate the channel response $\bm{h}$ from $\bm{y} = \bm{s} * \bm{h} + \bm{n}$, the convolution of the channel response $\bm{h}$ and the input data sequence $\bm{s}$, corrupted by an additive noise $\bm{n}$. The unknown parameter $\bm{\theta} \triangleq ( \bm{s}, \bm{h} )$ is not identifiable since $( \alpha \bm{s}, \frac{1}{\alpha} \bm{h} )$ and $( \bm{s}, \bm{h} )$ are observationally equivalent for any constant $\alpha \neq 0$, so unbiased estimators do not exist. Practically this so-called scalar ambiguity problem is resolved by assigning a pre-determined value to one of the element of $\bm{s}$ \cite{Su2007}. That is, a constraint function $f( \bm{\theta} ) \triangleq s_n - c = 0$ is put on the parameter $\bm{\theta}$, where $s_n$ denotes the $n$th element of $\bm{s}$ and $c$ is some pre-determined constant. This is exactly the third approach mentioned above.

CRB for constrained parameters is already derived in \cite{Stoica1998,Gorman1990,Marzetta1993}. The value of the constrained CRB depends on the choice of the constraint function; different constraint functions lead to different values of the CRB. This bound is useful when the constraint function is exogenously given, but there are situations where we are able to modify the constraint function. Take blind channel estimation problems for example again. Suppose an engineer chooses the constraint function as $f( \bm{\theta} ) \triangleq s_1 - c = 0$ and designs an unbiased estimator corresponding to this constraint function, and finds the resulting MSE, although almost achieving the constrained CRB with respect to the constraint function, is still unsatisfactory compared with the target value. How can the engineer tell the unsatisfactory result is caused by the inappropriate choice of the constraint function, or simply because the target value is not attainable for \emph{any} choice of the constraint function?

The main contribution of this paper is the following theorem. The Moore-Penrose generalized inverse of a singular FIM is \emph{the constrained CRB corresponding to the minimum variance among all choices of minimum constraint functions}. According to the theorem, the logical validity of using the Moore-Penrose generalized inverse of a singular FIM as a covariance lower bound for unbiased estimators is justified, and a CRB for the joint design of the unbiased estimator and the constraint function is obtained. In addition to a performance bound, we also provide a sufficient condition for a constraint function to achieve the bound, which is an affine function of the parameter to be estimated. The above results facilitate future researches on the optimal joint design of constraint functions and unbiased estimators.

A mathematical definition of minimum constraint functions will be given in Section \ref{section_min_constraint}, but the meaning is conceptually easy to understand. In blind channel estimation problems, only a one-dimensional constraint on $\bm{\theta}$ is needed to resolve the scalar ambiguity, such as $f( \bm{\theta} ) = s_n - 1$, and any constraint function $\bm{f}$ that is essentially a one-dimensional constraint is a minimum constraint function as long as the constrained CRB exists.

The rest of the paper is organized as follows. The necessary background knowledge is given in Section \ref{section_preliminary}. Then we show that the Moore-Penrose generalized inverse of the FIM can be viewed as a CRB for constrained parameters with some constraint function in Section \ref{section_J_dagger_is_constrained_crb}. Section \ref{section_interpret_J_dagger} is divided into two sub-sections. In the first sub-section we give the definition of minimum constraint functions and justify its meaning. In the second sub-section we prove the main result of this paper, that the Moore-Penrose generalized inverse of the FIM is the CRB corresponding to the minimum variance among all choices of minimum constraint functions. Conclusions and some discussions are presented in Section \ref{section_conclusions}.

\subsection*{Notation}
Bold-faced lower case letters represent column vectors, and bold-faced upper case letters are matrices. Superscripts $\bm{v}^T$, $\bm{M}^{-1}$, and $\bm{M}^{\dagger}$ denote the transpose, inverse, and the Moore-Penrose generalized inverse of the corresponding vector or matrix. The vector $\mathsf{E}\left[ \bm{v} \right]$ denotes the expectation of the random vector $\bm{v}$, and $\mathsf{E}\left[ \bm{M} \right]$ denotes the expectation of the random matrix $\bm{M}$. The matrix $\mathsf{cov}(\bm{u}, \bm{v})$ is defined as $\mathsf{cov}(\bm{u}, \bm{v}) \triangleq \mathsf{E} \left[ ( \bm{u} - \mathsf{E}( \bm{u} ) ) ( \bm{v} - \mathsf{E}( \bm{v} ) )^T \right]$, which is the cross-covariance matrix of random vectors $\bm{u}$ and $\bm{v}$. We use the notation $\bm{A} \geq \bm{B}$ to mean that $\bm{A} - \bm{B}$ is a nonnegative-definite matrix. The notation $\mathsf{rank}\bm{M}$ denotes the rank of the matrix $\bm{M}$.

\section{Preliminaries} \label{section_preliminary}
In this section, some background knowledge required to begin the discussions in the following sections is presented. We restrict our attention to the case of \emph{unbiased estimators for the unknown parameter}, so the theorems presented in this section may be simplified versions of those on the original papers.

When we refer to the \emph{CRB for unconstrained parameters}, we mean the following theorem.

\begin{theorem}[CRB for unconstrained parameters] \label{them_crlb}
Let $\hat{\bm{\theta}}$ be an unbiased estimator of a real unknown parameter $\bm{\theta}$ based on real observation $\bm{y}$, which is characterized by its probability density function (pdf) $p( \bm{y}; \bm{\theta} )$. Then for any such $\hat{\bm{\theta}}$, we have
\begin{equation}
\mathsf{cov}\left( \hat{\bm{\theta}}, \hat{\bm{\theta}} \right) \geq \bm{J}^{\dagger}, \notag
\end{equation}
where $\bm{J}$ is the FIM defined as
\begin{equation} \label{eq_FIM}
\bm{J} \triangleq \mathsf{E}\left[ \frac{\partial \ln p}{\partial \bm{\theta}} \frac{\partial \ln p}{\partial \bm{\theta}^T} \right]. 
\end{equation}
The equality is achieved if and only if 
\begin{equation}
\hat{\bm{\theta}} - \bm{\theta} = \bm{J}^{\dagger} \frac{\partial \log p}{\partial \bm{\theta}} \notag
\end{equation}
in the mean square sense.
\end{theorem}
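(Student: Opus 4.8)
The plan is to derive the bound from the classical covariance (Cauchy--Schwarz) inequality, replacing $\bm{J}^{-1}$ throughout by $\bm{J}^{\dagger}$ via the Moore--Penrose identities $\bm{J}\bm{J}^{\dagger}\bm{J}=\bm{J}$ and $\bm{J}^{\dagger}\bm{J}\bm{J}^{\dagger}=\bm{J}^{\dagger}$ together with the symmetry of $\bm{J}^{\dagger}$ (which holds since $\bm{J}$ is symmetric and nonnegative-definite). First I would record the two first-order facts that follow from the hypotheses. With $\bm{s}\triangleq\partial\ln p/\partial\bm{\theta}$ the score vector, the regularity conditions permitting differentiation of $\int p(\bm{y};\bm{\theta})\,d\bm{y}=1$ under the integral sign give $\mathsf{E}[\bm{s}]=\bm{0}$, hence $\mathsf{cov}(\bm{s},\bm{s})=\mathsf{E}[\bm{s}\bm{s}^{T}]=\bm{J}$; differentiating the unbiasedness identity $\int\hat{\bm{\theta}}(\bm{y})\,p(\bm{y};\bm{\theta})\,d\bm{y}=\bm{\theta}$ with respect to $\bm{\theta}^{T}$ and using $\partial p/\partial\bm{\theta}^{T}=p\,(\partial\ln p/\partial\bm{\theta}^{T})$ gives $\mathsf{E}[\hat{\bm{\theta}}\bm{s}^{T}]=\bm{I}$, so $\mathsf{cov}(\hat{\bm{\theta}},\bm{s})=\bm{I}$. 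Put $\bm{C}\triangleq\mathsf{cov}(\hat{\bm{\theta}},\hat{\bm{\theta}})$.

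For the inequality, fix an arbitrary real vector $\bm{a}$, set $\bm{b}\triangleq\bm{J}^{\dagger}\bm{a}$, and apply Cauchy--Schwarz to the zero-mean scalars $u\triangleq\bm{a}^{T}(\hat{\bm{\theta}}-\bm{\theta})$ and $v\triangleq\bm{b}^{T}\bm{s}$: $\mathsf{E}[u^{2}]\,\mathsf{E}[v^{2}]\ge(\mathsf{E}[uv])^{2}$. The three expectations evaluate to $\mathsf{E}[u^{2}]=\bm{a}^{T}\bm{C}\bm{a}$, $\mathsf{E}[v^{2}]=\bm{b}^{T}\bm{J}\bm{b}=\bm{a}^{T}\bm{J}^{\dagger}\bm{J}\bm{J}^{\dagger}\bm{a}=\bm{a}^{T}\bm{J}^{\dagger}\bm{a}$, and $\mathsf{E}[uv]=\bm{a}^{T}\mathsf{cov}(\hat{\bm{\theta}},\bm{s})\bm{b}=\bm{a}^{T}\bm{b}=\bm{a}^{T}\bm{J}^{\dagger}\bm{a}$. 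When $\bm{a}^{T}\bm{J}^{\dagger}\bm{a}>0$ these combine to $\bm{a}^{T}\bm{C}\bm{a}\ge\bm{a}^{T}\bm{J}^{\dagger}\bm{a}$; when $\bm{a}^{T}\bm{J}^{\dagger}\bm{a}=0$ the inequality is immediate because $\bm{C}\ge\bm{0}$. Since $\bm{a}$ is arbitrary, $\bm{C}\ge\bm{J}^{\dagger}$. (One could instead note that the joint covariance matrix of $(\hat{\bm{\theta}}-\bm{\theta})$ and $\bm{s}$ is nonnegative-definite with blocks $\bm{C},\bm{I},\bm{I},\bm{J}$ and invoke a generalized Schur-complement inequality, but the direct calculation sidesteps a range condition.)

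For the equality clause, $\bm{C}=\bm{J}^{\dagger}$ holds iff $\bm{a}^{T}(\bm{C}-\bm{J}^{\dagger})\bm{a}=0$ for every $\bm{a}$, i.e.\ iff the Cauchy--Schwarz step is tight for every $\bm{a}$ with $\bm{a}^{T}\bm{J}^{\dagger}\bm{a}>0$. Tightness forces $u=\lambda v$ almost surely for a scalar $\lambda$, and since $\mathsf{E}[uv]=\mathsf{E}[v^{2}]$ we get $\lambda=1$, so $\bm{a}^{T}(\hat{\bm{\theta}}-\bm{\theta})=\bm{a}^{T}\bm{J}^{\dagger}\bm{s}$ a.s.; letting $\bm{a}$ range over a basis of the range of $\bm{J}$ and using linearity gives $\hat{\bm{\theta}}-\bm{\theta}=\bm{J}^{\dagger}\bm{s}$ in the mean-square sense. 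Conversely, if $\hat{\bm{\theta}}-\bm{\theta}=\bm{J}^{\dagger}\bm{s}$ in mean square then $\bm{C}=\bm{J}^{\dagger}\mathsf{cov}(\bm{s},\bm{s})\bm{J}^{\dagger}=\bm{J}^{\dagger}\bm{J}\bm{J}^{\dagger}=\bm{J}^{\dagger}$.

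The part I expect to require the most care is the genuinely singular case. The matrix inequality must be asserted in every direction, including along the null space of $\bm{J}$ (handled by the trivial branch), and the equality analysis only determines $\hat{\bm{\theta}}-\bm{\theta}$ modulo that null space, which is precisely why the statement says ``in the mean square sense''. Note also that $\mathsf{cov}(\hat{\bm{\theta}},\bm{s})=\bm{I}$ and $\bm{J}\bm{v}=\bm{0}$ together force $\bm{v}=\bm{0}$, so a finite-variance unbiased estimator exists only when $\bm{J}$ is nonsingular; arranging the argument so that it is a substantive claim for nonsingular $\bm{J}$ and a true (if vacuous) one otherwise is the subtle point. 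Alternatively, one may simply invoke the general Cram\'{e}r--Rao bound of \cite{Rao1973} for the inequality and give only the equality discussion.
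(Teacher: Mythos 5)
Your proof is correct in substance and, unlike the paper, actually self-contained: the paper disposes of this theorem by citation (Rao, Kay, Van Trees), so what you supply is the standard Cauchy--Schwarz/score-function argument adapted to the Moore--Penrose inverse. The adaptation is done well: evaluating $\mathsf{E}[u^2]$, $\mathsf{E}[v^2]$, $\mathsf{E}[uv]$ with $\bm{b}=\bm{J}^{\dagger}\bm{a}$ and using $\bm{J}^{\dagger}\bm{J}\bm{J}^{\dagger}=\bm{J}^{\dagger}$ together with the symmetry of $\bm{J}^{\dagger}$ gives $\bm{a}^T\bm{C}\bm{a}\ge\bm{a}^T\bm{J}^{\dagger}\bm{a}$ direction by direction, and the trivial branch $\bm{a}^T\bm{J}^{\dagger}\bm{a}=0$ handles the null space of $\bm{J}$; this is exactly the right way to sidestep the range condition that the block-covariance/Schur-complement route would impose. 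Your closing observation that $\mathsf{cov}(\hat{\bm{\theta}},\bm{s})=\bm{I}$ together with $\bm{J}\bm{v}=\bm{0}$ forces $\bm{v}=\bm{0}$ is precisely the content of the paper's Theorem 2.2 (Stoica--Marzetta), and correctly identifies that the bound is vacuous, rather than false, for singular $\bm{J}$.

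Two small repairs are needed in the equality (``only if'') direction. First, letting $\bm{a}$ range over a basis of the range of $\bm{J}$ only shows that the component of $\hat{\bm{\theta}}-\bm{\theta}-\bm{J}^{\dagger}\bm{s}$ along that range vanishes; to get the full vector identity claimed by the theorem you must also take $\bm{a}$ in the null space of $\bm{J}$, where $\bm{a}^T\bm{C}\bm{a}=\bm{a}^T\bm{J}^{\dagger}\bm{a}=0$ and unbiasedness force $\bm{a}^T\hat{\bm{\theta}}=\bm{a}^T\bm{\theta}$ almost surely, which matches $\bm{a}^T\bm{J}^{\dagger}\bm{s}=0$ (alternatively, note the whole equality case is vacuous unless $\bm{J}$ is nonsingular, in which case your basis already spans $\mathbb{R}^n$). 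Second, your parenthetical reading of ``in the mean square sense'' as ``determined only modulo the null space'' is not what the phrase means here; it simply means the identity holds in $L^2$ (almost surely), and with the null-space step above that stronger statement is in fact what you obtain. You should also state explicitly the regularity assumptions (differentiation under the integral sign for both $\int p\,d\bm{y}=1$ and the unbiasedness identity), since they are what give $\mathsf{E}[\bm{s}]=\bm{0}$ and $\mathsf{cov}(\hat{\bm{\theta}},\bm{s})=\bm{I}$.
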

\begin{proof}
See \cite{Rao1973,Kay1993,VanTrees2001}.
\end{proof}

The above theorem is always correct given that unbiased estimators exist. Stoica \textit{et al.}, however, prove the following theorem in \cite{Stoica2001}.

\begin{theorem}
If the information matrix $\bm{J}$ is singular, then there does not exist an unbiased estimator with finite variance.
\end{theorem}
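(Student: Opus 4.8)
The plan is to argue by contradiction, exploiting the fact that a null vector of $\bm{J}$ forces a linear combination of the score components to vanish almost surely. Suppose, contrary to the claim, that $\hat{\bm{\theta}}$ is an unbiased estimator of $\bm{\theta}$ with finite variance, so that every entry of $\mathsf{cov}( \hat{\bm{\theta}}, \hat{\bm{\theta}} )$ is finite. Since $\bm{J}$ is singular, there is a nonzero real vector $\bm{a}$ with $\bm{J} \bm{a} = \bm{0}$, hence $\bm{a}^T \bm{J} \bm{a} = 0$. By the definition \eqref{eq_FIM} of $\bm{J}$,
\begin{equation}
\bm{a}^T \bm{J} \bm{a} = \mathsf{E}\left[ \left( \bm{a}^T \frac{\partial \ln p}{\partial \bm{\theta}} \right)^{2} \right] = 0, \notag
\end{equation}
so the scalar random variable $\bm{a}^T \partial \ln p / \partial \bm{\theta}$ equals zero for $p( \cdot\,; \bm{\theta} )$-almost every $\bm{y}$.

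Next I would bring in the unbiasedness constraint. Writing unbiasedness as $\int \hat{\bm{\theta}}\, p( \bm{y}; \bm{\theta} )\, d\bm{y} = \bm{\theta}$, differentiating both sides with respect to $\bm{\theta}^T$, and interchanging differentiation and integration (valid under the usual regularity conditions on $p$, and meaningful here because $\hat{\bm{\theta}}$ has finite variance) gives $\mathsf{E}[ \hat{\bm{\theta}}\, \partial \ln p / \partial \bm{\theta}^T ] = \bm{I}$. Combining this with the standard identity $\mathsf{E}[ \partial \ln p / \partial \bm{\theta} ] = \bm{0}$ (obtained by differentiating the normalization $\int p\, d\bm{y} = 1$), one obtains
\begin{equation}
\mathsf{cov}\left( \hat{\bm{\theta}}, \frac{\partial \ln p}{\partial \bm{\theta}} \right) = \bm{I}. \notag
\end{equation}

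Finally I would post-multiply this identity by $\bm{a}$. The right-hand side gives $\bm{I}\bm{a} = \bm{a}$, whereas the left-hand side equals
\begin{equation}
\mathsf{E}\left[ ( \hat{\bm{\theta}} - \bm{\theta} ) \left( \bm{a}^T \frac{\partial \ln p}{\partial \bm{\theta}} \right) \right] = \bm{0}, \notag
\end{equation}
since the scalar factor $\bm{a}^T \partial \ln p / \partial \bm{\theta}$ vanishes almost surely. Hence $\bm{a} = \bm{0}$, contradicting $\bm{a} \neq \bm{0}$; therefore no unbiased estimator of $\bm{\theta}$ with finite variance can exist.

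I expect the only genuine obstacle to be the analytic justification of differentiating under the integral sign in the unbiasedness identity, together with the accompanying measure-zero bookkeeping — in particular, verifying that the expectations involved are well-defined, which is precisely the point at which the finiteness of $\mathsf{cov}( \hat{\bm{\theta}}, \hat{\bm{\theta}} )$ is used. Everything else is elementary linear algebra combined with the first-order properties of the score function; in fact the same computation underlies the general-form CRB $\mathsf{cov}( \hat{\bm{\theta}}, \hat{\bm{\theta}} ) \geq \bm{J}^{\dagger}$ of Theorem \ref{them_crlb}.
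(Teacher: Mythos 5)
Your argument is correct, and it is essentially the same proof that the paper defers to by citation: the footnote's condition $\bm{J}\bm{J}^{\dagger} = \bm{I}$ from \cite{Stoica2001} is obtained there by exactly your mechanism --- the identity $\mathsf{cov}\bigl( \hat{\bm{\theta}}, \partial \ln p / \partial \bm{\theta} \bigr) = \bm{I}$ combined with the fact that the score is annihilated almost surely by the null space of $\bm{J}$ --- except that the reference multiplies by the projector $\bm{I} - \bm{J}\bm{J}^{\dagger}$ where you use a single null vector $\bm{a}$, which is an equivalent and if anything cleaner specialization to the unbiased case. The one small point worth making explicit is that the step $\mathsf{E}\bigl[ ( \hat{\bm{\theta}} - \bm{\theta} )\, \bm{a}^T \partial \ln p / \partial \bm{\theta} \bigr] = \bm{0}$ needs the product to be integrable before you can evaluate it as the expectation of an a.s.-zero quantity; this follows from Cauchy--Schwarz, $\mathsf{E}\bigl[ \lvert \hat{\theta}_i - \theta_i \rvert \cdot \lvert \bm{a}^T \partial \ln p / \partial \bm{\theta} \rvert \bigr] \leq \bigl( \mathsf{var}\, \hat{\theta}_i \bigr)^{1/2} \bigl( \bm{a}^T \bm{J} \bm{a} \bigr)^{1/2} = 0$, and is precisely where the finite-variance hypothesis enters --- so the gap you flagged is real but closes immediately.
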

\begin{proof}
See \cite{Stoica2001}\footnote{When we restrict our attention to unbiased estimators for the unknown parameter only, the condition for the existence of an unbiased estimator with finite variance in \cite{Stoica2001} becomes $\bm{J} \bm{J}^{\dagger} = \bm{I}$, which is impossible for singular FIMs.}.
\end{proof}

That is, there does not exist any finite unbiased estimator $\hat{\bm{\theta}}$ if the FIM is singular, so the CRB fails to provide any useful information.

When we refer to the \emph{CRB for constrained parameters}, we mean the following theorem.

\begin{theorem}[CRB for constrained parameters] \label{thm_constrained_crlb}
Let $\hat{\bm{\theta}}$ be an unbiased estimator of an unknown parameter $\bm{\theta} \in \mathbb{R}^n$ based on real observation $\bm{y}$, which is characterized by its pdf $p( \bm{y}; \bm{\theta} )$. Furthermore, we require the parameter $\bm{\theta}$ to satisfy a possibly non-affine constraint function $\bm{f}: \mathbb{R}^n \to \mathbb{R}^m$, $m\leq n$, 
\begin{equation}
\bm{f}( \bm{\theta} ) = \bm{0}. \notag
\end{equation}
Assume that $\partial \bm{f} / \partial \bm{\theta}^T$ is full rank. Choose a matrix $\bm{U}$ with $(n-m)$ orthonormal columns such that
\begin{equation}
\frac{\partial \bm{f}}{ \partial \bm{\theta}^T } \bm{U} = \bm{0}. \notag
\end{equation}
If $\bm{U}^T \bm{J} \bm{U}$ is nonsingular, then
\begin{equation}
\mathsf{cov}\left( \hat{\bm{\theta}}, \hat{\bm{\theta}} \right) \geq \bm{U} \left( \bm{U}^T \bm{J} \bm{U} \right)^{-1} \bm{U}^T, \notag
\end{equation}
where $\bm{J}$ is the FIM defined as in (\ref{eq_FIM}). The equality is achieved if and only if 
\begin{equation}
\hat{\bm{\theta}} - \bm{\theta} = \bm{U} ( \bm{U}^T \bm{J} \bm{U} )^{-1} \bm{U}^T \frac{\partial \log p}{\partial \bm{\theta}} \notag
\end{equation}
in the mean square sense.
\end{theorem}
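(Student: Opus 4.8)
The plan is to restrict the unbiasedness identity to directions tangent to the constraint surface, thereby reducing the problem to an unconstrained Cram\'{e}r--Rao-type argument in an $(n-m)$-dimensional effective parameter, and then to finish with the matrix form of the Cauchy--Schwarz (covariance) inequality. Throughout, write $\bm{s} \triangleq \partial \log p / \partial \bm{\theta}$ for the score vector and assume the usual regularity conditions that permit differentiation under the integral sign; in particular $\mathsf{E}[\bm{s}] = \bm{0}$ and $\mathsf{cov}(\bm{s},\bm{s}) = \bm{J}$. Since $\partial\bm{f}/\partial\bm{\theta}^T$ has rank $m$, the columns of $\bm{U}$ form an orthonormal basis of $\ker(\partial\bm{f}/\partial\bm{\theta}^T)$.

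\textbf{Step 1 (local coordinates on the constraint set).} Because $\partial\bm{f}/\partial\bm{\theta}^T$ is full rank at $\bm{\theta}$, the implicit function theorem shows that $\{\bm{\vartheta}\in\mathbb{R}^n : \bm{f}(\bm{\vartheta}) = \bm{0}\}$ is, near $\bm{\theta}$, a smooth $(n-m)$-dimensional manifold whose tangent space at $\bm{\theta}$ is precisely $\ker(\partial\bm{f}/\partial\bm{\theta}^T) = \operatorname{range}(\bm{U})$. I would pick a smooth local parametrization $\bm{\vartheta} = \bm{g}(\bm{\alpha})$, $\bm{\alpha}\in\mathbb{R}^{n-m}$, with $\bm{g}(\bm{\alpha}_0) = \bm{\theta}$; its Jacobian at $\bm{\alpha}_0$ has full column rank and range equal to that of $\bm{U}$, so after a linear reparametrization of $\bm{\alpha}$ we may assume $\partial\bm{g}/\partial\bm{\alpha}^T = \bm{U}$ at $\bm{\alpha}_0$. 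This is the step that actually uses the hypotheses on $\bm{f}$ and the defining property of $\bm{U}$, and I expect it to be the only genuinely delicate point; everything afterwards is bookkeeping.

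\textbf{Step 2 (differentiate the unbiasedness identity along the surface).} Unbiasedness gives $\int \hat{\bm{\theta}}(\bm{y})\,p\bigl(\bm{y};\bm{g}(\bm{\alpha})\bigr)\,d\bm{y} = \bm{g}(\bm{\alpha})$ for all $\bm{\alpha}$ near $\bm{\alpha}_0$. Differentiating with respect to $\bm{\alpha}^T$ (using the chain rule and $\partial p/\partial\bm{\theta}^T = p\,\partial\log p/\partial\bm{\theta}^T$) and evaluating at $\bm{\alpha}_0$ yields $\mathsf{E}\bigl[\hat{\bm{\theta}}\,\bm{s}^T\bigr]\bm{U} = \bm{U}$. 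Since $\mathsf{E}[\hat{\bm{\theta}}] = \bm{\theta}$ and $\mathsf{E}[\bm{s}] = \bm{0}$, this is $\mathsf{cov}(\hat{\bm{\theta}},\bm{s})\,\bm{U} = \bm{U}$; equivalently, writing $\bm{z}\triangleq\bm{U}^T\bm{s}$,
\[
\mathsf{cov}(\hat{\bm{\theta}},\bm{z}) = \bm{U}, \qquad \mathsf{cov}(\bm{z},\bm{z}) = \bm{U}^T\bm{J}\bm{U},
\]
and $\bm{U}^T\bm{J}\bm{U}$ is invertible by hypothesis.

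\textbf{Step 3 (covariance inequality and equality case).} For any conformable matrix $\bm{A}$, the matrix $\mathsf{E}\bigl[(\hat{\bm{\theta}}-\bm{\theta}-\bm{A}\bm{z})(\hat{\bm{\theta}}-\bm{\theta}-\bm{A}\bm{z})^T\bigr]$ is nonnegative-definite. Expanding it with the two identities of Step 2 and choosing $\bm{A} = \bm{U}(\bm{U}^T\bm{J}\bm{U})^{-1}$ collapses three of its four terms to $\bm{U}(\bm{U}^T\bm{J}\bm{U})^{-1}\bm{U}^T$, leaving
\[
\mathsf{cov}(\hat{\bm{\theta}},\hat{\bm{\theta}}) - \bm{U}(\bm{U}^T\bm{J}\bm{U})^{-1}\bm{U}^T \geq \bm{0},
\]
which is the asserted bound. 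Equality forces that nonnegative-definite matrix to vanish, i.e. $\hat{\bm{\theta}}-\bm{\theta} = \bm{A}\bm{z} = \bm{U}(\bm{U}^T\bm{J}\bm{U})^{-1}\bm{U}^T\,\partial\log p/\partial\bm{\theta}$ in the mean-square sense; conversely one checks directly that this choice attains the bound, giving the stated characterization of equality.
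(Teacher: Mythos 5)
Your proof is correct. The paper gives no proof of this theorem itself---it simply cites Stoica and Ng \cite{Stoica1998}---and your argument reconstructs essentially that standard proof: parametrize the constraint set locally so that the unbiasedness identity can be differentiated only along tangent directions, yielding $\mathsf{cov}\bigl( \hat{\bm{\theta}}, \bm{U}^T \partial \log p / \partial \bm{\theta} \bigr) = \bm{U}$ together with $\mathsf{cov}\bigl( \bm{U}^T \partial \log p / \partial \bm{\theta}, \bm{U}^T \partial \log p / \partial \bm{\theta} \bigr) = \bm{U}^T \bm{J} \bm{U}$, and then finish with the matrix covariance (Cauchy--Schwarz) inequality, whose equality case gives exactly the stated condition.
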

\begin{proof}
See \cite{Stoica1998}.
\end{proof}

The following theorem gives a necessary and sufficient condition for the existence of a finite constrained CRB.

\begin{theorem} \label{thm_no_estimator_constrained_crb}
The constrained CRB is finite if and only if the matrix $\bm{U}^T \bm{J} \bm{U}$ is non-singular.
\end{theorem}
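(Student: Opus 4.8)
The plan is to establish the two implications separately. The ``if'' direction is immediate: if $\bm{U}^T \bm{J} \bm{U}$ is non-singular, then $\left( \bm{U}^T \bm{J} \bm{U} \right)^{-1}$ has finite entries, so $\bm{U}\left( \bm{U}^T \bm{J} \bm{U} \right)^{-1}\bm{U}^T$ is a finite matrix, and by Theorem~\ref{thm_constrained_crlb} this matrix is the constrained CRB; hence the constrained CRB is finite. The content of the theorem lies in the ``only if'' direction, and the idea there is to reduce the constrained estimation problem to a genuinely unconstrained one whose FIM is exactly $\bm{U}^T \bm{J} \bm{U}$, and then invoke the dichotomy for unconstrained problems supplied by Theorem~\ref{them_crlb} together with the theorem of Stoica \emph{et al.} above (a singular FIM admits no unbiased estimator with finite variance).

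For the ``only if'' direction I would argue the contrapositive. Suppose $\bm{U}^T \bm{J} \bm{U}$ is singular. Since $\partial \bm{f} / \partial \bm{\theta}^T$ has full rank $m$, the implicit function theorem yields, on a neighbourhood of the true parameter, a smooth regular parametrization $\bm{\theta} = \bm{g}(\bm{\xi})$, $\bm{\xi} \in \mathbb{R}^{n-m}$, of the constraint set $\{ \bm{\theta} : \bm{f}(\bm{\theta}) = \bm{0} \}$, with $\partial \bm{g} / \partial \bm{\xi}^T$ of full column rank $n-m$. Its columns span $\ker( \partial \bm{f} / \partial \bm{\theta}^T )$, which is exactly the column space of the orthonormal matrix $\bm{U}$; hence, after a non-singular linear reparametrization of $\bm{\xi}$, we may take $\partial \bm{g} / \partial \bm{\xi}^T = \bm{U}$ at the point of interest. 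The family $p\!\left( \bm{y}; \bm{g}(\bm{\xi}) \right)$ is then an ordinary unconstrained model for $\bm{\xi}$ whose FIM is, by the chain rule, $\left( \partial \bm{g} / \partial \bm{\xi}^T \right)^T \bm{J}\, \partial \bm{g} / \partial \bm{\xi}^T = \bm{U}^T \bm{J} \bm{U}$, which is singular by assumption. The constrained CRB is, by construction, the CRB of this reparametrized model transported back to $\bm{\theta}$-coordinates; since the theorem of Stoica \emph{et al.} shows the reparametrized model has no finite-variance unbiased estimator, the constrained CRB cannot serve as a finite, non-vacuous covariance bound. Therefore a finite constrained CRB forces $\bm{U}^T \bm{J} \bm{U}$ to be non-singular.

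I expect the main obstacle to be pinning down precisely what ``the constrained CRB is finite'' should mean and making the equivalence of the two estimation problems rigorous: a finite-covariance unbiased estimator $\hat{\bm{\theta}}$ lying on the constraint set does not map to an \emph{exactly} unbiased estimator of $\bm{\xi}$ under the nonlinear $\bm{g}^{-1}$, so one must either work with the reparametrized problem intrinsically from the start (taking the constrained CRB to be the CRB of that problem, i.e.\ $\bm{U}\left( \bm{U}^T \bm{J} \bm{U} \right)^{-1}\bm{U}^T$ when $\bm{U}^T \bm{J} \bm{U}$ is invertible and not finite otherwise) or carry along a first-order argument relating the two score functions. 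The remaining ingredients --- the chain-rule formula for the reparametrized FIM, and the linear change of coordinates making the parametrization Jacobian equal $\bm{U}$ --- are routine.
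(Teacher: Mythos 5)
Your ``if'' direction is fine, and since the paper itself supplies no argument for this theorem (it simply defers to the cited reference of Stoica and Ng), any complete proof you give is necessarily your own route. The problem is that your ``only if'' direction, as written, does not close. Its key inference is: the reparametrized model in $\bm{\xi}$ has FIM $\bm{U}^T\bm{J}\bm{U}$, which is singular, hence (by the Stoica--Marzetta theorem quoted in Section II) it admits no finite-variance unbiased estimator, hence ``the constrained CRB cannot be finite.'' That last step conflates two different things, and this very paper is the counterexample to the conflation: when $\bm{J}$ itself is singular there is likewise no finite-variance unbiased estimator, yet the unconstrained bound $\bm{J}^{\dagger}$ of Theorem \ref{them_crlb} is a perfectly finite matrix --- indeed the whole point of the paper is to use it. So non-existence of finite-variance unbiased estimators for $\bm{\xi}$ does not, by itself, say anything about whether the constrained bound is a finite matrix. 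On top of this, the reduction itself is left open: a constrained unbiased estimator of $\bm{\theta}$ does not yield an exactly unbiased estimator of $\bm{\xi}$ under the nonlinear $\bm{g}^{-1}$ (you flag this yourself), and your proposed fallback --- simply declaring the constrained CRB ``not finite'' whenever $\bm{U}^T\bm{J}\bm{U}$ is singular --- makes the theorem true by definition, i.e.\ is circular.

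What is actually needed is a direct argument on the constrained problem, and it is short. For any unbiased estimator $\hat{\bm{\theta}}$ with finite variance, differentiating the unbiasedness condition along the constraint set gives $\mathsf{E}\bigl[ (\hat{\bm{\theta}}-\bm{\theta})\, (\partial \ln p/\partial\bm{\theta})^T \bigr]\bm{U} = \bm{U}$ (this is the standard step in the derivation of Theorem \ref{thm_constrained_crlb}). Now suppose $\bm{U}^T\bm{J}\bm{U}$ is singular and take $\bm{v}\neq\bm{0}$ with $\bm{U}^T\bm{J}\bm{U}\bm{v}=\bm{0}$. The scalar score combination $\bm{v}^T\bm{U}^T\,\partial\ln p/\partial\bm{\theta}$ has zero mean and variance $\bm{v}^T\bm{U}^T\bm{J}\bm{U}\bm{v}=0$, so by Cauchy--Schwarz its correlation with any finite-variance component $\bm{a}^T(\hat{\bm{\theta}}-\bm{\theta})$ vanishes; multiplying the displayed identity on the right by $\bm{v}$ then forces $\bm{U}\bm{v}=\bm{0}$, contradicting the orthonormality of the columns of $\bm{U}$. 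Hence in the direction $\bm{U}\bm{v}$ no finite variance is compatible with constrained unbiasedness, which is precisely the sense in which the constrained CRB fails to be finite. Your reparametrization picture is a reasonable heuristic, but the proof has to run through this covariance identity for the constrained problem (or an equivalent locally-unbiased argument), not through the estimator-nonexistence theorem for the $\bm{\xi}$-model alone.
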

\begin{proof}
See \cite{Stoica1998}.
\end{proof}

Now we are able to discuss the relationship between the Moore-Penrose generalized inverse of an FIM and constrained CRB.

\section{$\bm{J}^{\dagger}$ as a CRB for Constrained Parameters} \label{section_J_dagger_is_constrained_crb}

The main result of this section is the following theorem. 
\begin{theorem} \label{thm_pseudoinverse_constrained_crlb}
Let the FIM $\bm{J}$ be singular with rank $r$, and let the singular value decomposition (SVD) of $\bm{J}$ be
\begin{equation}
\bm{J} = \left[ \begin{array}{cc} \bm{U}_r & \overline{\bm{U}}_r \end{array} \right] \left[ \begin{array}{cc} \bm{\Sigma} & \bm{0} \\ \bm{0} & \bm{0}  \end{array} \right] \left[ \begin{array}{c} \bm{U}_r^T \\ \overline{\bm{U}}_r^T \end{array} \right], \label{eq_svd_J}
\end{equation}
the diagonal elements of $\bm{\Sigma}$ being nonzero. Then $\bm{J}^{\dagger}$ is a CRB for constrained parameters with constraint function
\begin{equation}
\bm{f}( \boldsymbol{\theta} ) = \overline{\boldsymbol{U}}_r^T \boldsymbol{\theta} + \boldsymbol{C} = \boldsymbol{0} \label{eq_exp_min_constraint_func}
\end{equation}
for some constant matrix $\bm{C}$.
\end{theorem}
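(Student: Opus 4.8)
The plan is to exhibit the constraint function in (\ref{eq_exp_min_constraint_func}) as one to which Theorem~\ref{thm_constrained_crlb} applies verbatim, and then to check that the constrained CRB it produces is exactly $\bm{J}^{\dagger}$. Since $\bm{J}$ is symmetric and nonnegative definite, its SVD (\ref{eq_svd_J}) is also an eigendecomposition: the columns of $\bm{U}_r$ form an orthonormal basis for the range of $\bm{J}$, the columns of $\overline{\bm{U}}_r$ (those associated with the zero block) form an orthonormal basis for the null space of $\bm{J}$, and $[\,\bm{U}_r\ \ \overline{\bm{U}}_r\,]$ is orthogonal. Reading off from (\ref{eq_svd_J}), $\bm{J}^{\dagger} = \bm{U}_r \bm{\Sigma}^{-1} \bm{U}_r^T$.

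First I would fix the constant. Letting $\bm{\theta}_0$ denote the true parameter, set $\bm{C} \triangleq -\overline{\bm{U}}_r^T \bm{\theta}_0$, so that $\bm{f}(\bm{\theta}) = \overline{\bm{U}}_r^T(\bm{\theta} - \bm{\theta}_0)$ vanishes at $\bm{\theta}_0$ and defines an $(n-r)$-dimensional affine constraint, i.e.\ $m = n - r \leq n$. Its Jacobian is $\partial \bm{f}/\partial \bm{\theta}^T = \overline{\bm{U}}_r^T$, which has full row rank $n-r$ because the columns of $\overline{\bm{U}}_r$ are orthonormal; hence the regularity hypothesis of Theorem~\ref{thm_constrained_crlb} is met.

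Next I would choose the matrix $\bm{U}$ that Theorem~\ref{thm_constrained_crlb} requires. Take $\bm{U} \triangleq \bm{U}_r$: it has $n - m = r$ orthonormal columns, and $(\partial \bm{f}/\partial \bm{\theta}^T)\,\bm{U} = \overline{\bm{U}}_r^T \bm{U}_r = \bm{0}$ because $[\,\bm{U}_r\ \ \overline{\bm{U}}_r\,]$ is orthogonal. A one-line computation using $\bm{U}_r^T \bm{U}_r = \bm{I}$ together with (\ref{eq_svd_J}) gives $\bm{U}^T \bm{J} \bm{U} = \bm{U}_r^T \bm{J} \bm{U}_r = \bm{\Sigma}$, which is nonsingular since its diagonal entries are nonzero. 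Theorem~\ref{thm_constrained_crlb} then yields, for every unbiased estimator satisfying the constraint, $\mathsf{cov}(\hat{\bm{\theta}}, \hat{\bm{\theta}}) \geq \bm{U}(\bm{U}^T \bm{J} \bm{U})^{-1}\bm{U}^T = \bm{U}_r \bm{\Sigma}^{-1} \bm{U}_r^T = \bm{J}^{\dagger}$, which is the assertion of the theorem.

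I do not expect a genuine obstacle; the only points needing a little care are (i) noting that the value $\bm{U}(\bm{U}^T\bm{J}\bm{U})^{-1}\bm{U}^T$ of the constrained CRB is invariant under replacing $\bm{U}$ by $\bm{U}\bm{Q}$ for any orthogonal $\bm{Q}$, so that selecting $\bm{U} = \bm{U}_r$ among the admissible orthonormal bases for $\ker(\partial \bm{f}/\partial \bm{\theta}^T)$ entails no loss of generality, and (ii) invoking Theorem~\ref{thm_no_estimator_constrained_crb} — equivalently, the nonsingularity of $\bm{\Sigma}$ just verified — to confirm that this constrained CRB is finite, so the bound is indeed meaningful.
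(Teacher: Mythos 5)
Your proof is correct and follows essentially the same route as the paper's: both identify $\bm{J}^{\dagger}=\bm{U}_r\left(\bm{U}_r^T\bm{J}\bm{U}_r\right)^{-1}\bm{U}_r^T$ and match this to the constrained CRB of Theorem~\ref{thm_constrained_crlb} by taking $\bm{U}=\bm{U}_r$ and the affine constraint with Jacobian $\overline{\bm{U}}_r^T$. You merely run the argument in the forward direction (constraint $\to$ CRB $\to$ $\bm{J}^{\dagger}$) and add some explicit bookkeeping (the choice of $\bm{C}$, the nonsingularity of $\bm{\Sigma}$) that the paper leaves implicit.
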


To prove the theorem, we first prove the following lemma.

\begin{lem} \label{lem_pseudo_inverse_by_Us}
Let the SVD of a Hermitian matrix $\bm{J}$ be the same as in (\ref{eq_svd_J}). Then 
\begin{equation}
\bm{J}^{\dagger} = \bm{U}_r \left( \bm{U}_r^T \bm{J} \bm{U}_r \right)^{-1} \bm{U_r}^T. \label{eq_pseudoinverse_us}
\end{equation}
\end{lem}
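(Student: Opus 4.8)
The plan is to reduce both sides of (\ref{eq_pseudoinverse_us}) to a common closed form, using the orthonormality of the columns of $\left[\begin{array}{cc}\bm{U}_r & \overline{\bm{U}}_r\end{array}\right]$ together with the fact that, for a Hermitian matrix, the SVD (\ref{eq_svd_J}) is simultaneously an eigendecomposition.

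First I would simplify the middle factor on the right-hand side of (\ref{eq_pseudoinverse_us}). Since $\bm{U}_r^T \bm{U}_r = \bm{I}$ and $\bm{U}_r^T \overline{\bm{U}}_r = \bm{0}$, substituting (\ref{eq_svd_J}) gives $\bm{U}_r^T \bm{J} \bm{U}_r = \bm{\Sigma}$, which is invertible because the diagonal entries of $\bm{\Sigma}$ are nonzero; hence the right-hand side of (\ref{eq_pseudoinverse_us}) equals $\bm{U}_r \bm{\Sigma}^{-1} \bm{U}_r^T$. Next I would compute $\bm{J}^{\dagger}$ directly from (\ref{eq_svd_J}): since (\ref{eq_svd_J}) exhibits $\bm{J}$ as an orthogonal matrix times a diagonal block matrix times the transposed orthogonal matrix, its Moore-Penrose inverse is obtained by inverting $\bm{\Sigma}$ and leaving the zero block untouched, i.e. $\bm{J}^{\dagger} = \bm{U}_r \bm{\Sigma}^{-1} \bm{U}_r^T$. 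Comparing the two expressions proves (\ref{eq_pseudoinverse_us}).

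For a fully self-contained argument I would instead verify the four Moore-Penrose conditions for $\bm{M} \triangleq \bm{U}_r ( \bm{U}_r^T \bm{J} \bm{U}_r )^{-1} \bm{U}_r^T$. The identities $\bm{J}\bm{U}_r = \bm{U}_r\bm{\Sigma}$, $\bm{J}\overline{\bm{U}}_r = \bm{0}$, and $\bm{U}_r^T\bm{J}\bm{U}_r = \bm{\Sigma}$ obtained above yield $\bm{J}\bm{M} = \bm{M}\bm{J} = \bm{U}_r\bm{U}_r^T$, which is symmetric and idempotent; hence $\bm{J}\bm{M}\bm{J} = \bm{J}$, $\bm{M}\bm{J}\bm{M} = \bm{M}$, and both symmetry conditions hold immediately.

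There is no substantial obstacle here; the lemma is essentially a bookkeeping exercise with the SVD. The only points deserving a word of care are: the justification that inverting the nonzero singular values in (\ref{eq_svd_J}) really produces $\bm{J}^{\dagger}$ (valid because (\ref{eq_svd_J}) is a genuine SVD, so $\bm{\Sigma}$ is diagonal with positive entries); and the observation that Hermiticity is precisely what makes the left and right singular subspaces coincide, so that the single matrix $\bm{U}_r$ suffices on both sides of (\ref{eq_pseudoinverse_us}).
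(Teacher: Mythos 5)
Your proposal is correct and follows essentially the same route as the paper: substituting $\bm{J} = \bm{U}_r \bm{\Sigma} \bm{U}_r^T$ to get $\bm{U}_r^T \bm{J} \bm{U}_r = \bm{\Sigma}$, so that both sides reduce to $\bm{U}_r \bm{\Sigma}^{-1} \bm{U}_r^T$. The added verification of the four Moore--Penrose conditions is a harmless (and slightly more self-contained) elaboration of the same computation.
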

\begin{proof}
Substitute $\bm{J}$ as $\bm{J} = \bm{U}_r \bm{\Sigma} \bm{U}_r^T$ into (\ref{eq_pseudoinverse_us}).
\end{proof}

Now we are able to prove Theorem \ref{thm_pseudoinverse_constrained_crlb}.

\begin{proof}[Proof for Theorem \ref{thm_pseudoinverse_constrained_crlb}]
By examining the lemma and Theorem \ref{thm_constrained_crlb}, we can think of $\bm{J}^{\dagger}$ as a constrained CRB with some constraint function $\bm{f}(\bm{\theta})$ such that
\begin{equation}
\frac{\partial \bm{f}}{ \partial \bm{\theta}^T } \bm{U}_r = \bm{0}. \label{eq_constraint_f}
\end{equation}

Since $\overline{\bm{U}}_r^T \bm{U}_r = \mathbf{0}$ by the definition of SVD, 
a constraint function $\bm{f}$ that satisfies (\ref{eq_constraint_f}) can be chosen such that
\begin{equation}
\frac{\partial \bm{f}}{ \partial \bm{\theta}^T } = \overline{\bm{U}}_r^T. \notag
\end{equation}
The above equation can be satisfied by an affine constraint function, 
\begin{equation}
\bm{f}( \boldsymbol{\theta} ) = \overline{\bm{U}}_r^T \boldsymbol{\theta} + \boldsymbol{C} = \boldsymbol{0}, \notag
\end{equation}
and the theorem is proved.
\end{proof}

\begin{rem}
In fact, any constraint function satisfying
\begin{equation}
\frac{\partial \bm{f}}{\partial \bm{\theta}^T} \bm{U}_r = \bm{0} \notag
\end{equation}
leads to the same constrained CRB, $\bm{J}^\dagger$.
\end{rem}

\section{Interpretation of $\bm{J}^{\dagger}$ as a CRB for Constrained Parameters} \label{section_interpret_J_dagger}
In this section we prove that $\bm{J}^{\dagger}$ is not only a CRB for constrained parameters, but \emph{the CRB corresponding to the minimum variance among all choices of minimum constraint functions}. We first give a definition of minimum constraint functions, and then prove the claim. 

\subsection{Definition of Minimum Constraint Functions} \label{section_min_constraint}
Minimum constraint functions are defined as follows.

\begin{defn} \label{def_min_constraint}
A differentiable constraint function $\bm{f}: \mathbb{R}^n \to \mathbb{R}^m$, $m\leq n$, for a non-Bayesian parametric estimation problem with a singular FIM $\bm{J}$ is a minimum constraint if 
\begin{enumerate}
\item $\partial \bm{f} / \partial \bm{\theta}^T$ is full rank, 
\item $\bm{U}^T \bm{J} \bm{U}$ is nonsingular, and 
\item $\mathsf{rank}\ \partial \bm{f} / \partial \bm{\theta}^T + \mathsf{rank} \ \bm{J} = n$,
\end{enumerate}
where $\bm{U}$ is chosen as in Theorem \ref{thm_constrained_crlb}.
\end{defn}

The first requirement is to ensure that $\bm{f}$ does not contain any redundant constraints \cite{Gorman1990, Marzetta1993}. The second requirement is to ensure the existence of a finite CRB according to Theorem \ref{thm_no_estimator_constrained_crb}. The third requirement means that $\bm{f}$ contains the minimum number of independent constraints. Take blind channel estimation problems as an example. From discussions in Section \ref{section_intro} we know that once we choose one symbol as a pilot symbol with some pre-determined value, we eliminate the scalar ambiguity and thus an unbiased estimator exists. Note that the nullity of the FIM is also one \cite{Carvalho1997, Barbarossa2002}. We can see the third requirement holds.

Now we give a formal proof that if the first two requirements are satisfied, then the third requirement ensures that $\bm{f}$ contains the minimum number of independent constraints. 

\begin{theorem}
For any constraint function $\bm{f}$ in Definition \ref{def_min_constraint} that satisfies the first and the second requirements, 
\begin{equation}
\min_{\bm{f}} \mathsf{rank}\ \frac{\partial \bm{f}}{\partial \bm{\theta}^T} = n - \mathsf{rank}\ \bm{J}. \notag
\end{equation}
\end{theorem}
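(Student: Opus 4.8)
The plan is to prove the stated equality by a two-sided argument: first show that $\mathsf{rank}\ \partial\bm{f}/\partial\bm{\theta}^T \ge n - \mathsf{rank}\ \bm{J}$ holds for \emph{every} $\bm{f}$ meeting requirements 1 and 2, and then exhibit one admissible $\bm{f}$ attaining the value $n - \mathsf{rank}\ \bm{J}$, so that the minimum is exactly this number.

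For the lower bound, I would set $m \triangleq \mathsf{rank}\ \partial\bm{f}/\partial\bm{\theta}^T$. By requirement 1 the Jacobian $\partial\bm{f}/\partial\bm{\theta}^T$ has full row rank, so the matrix $\bm{U}$ of Theorem \ref{thm_constrained_crlb}, whose orthonormal columns span the null space of $\partial\bm{f}/\partial\bm{\theta}^T$, has exactly $n-m$ columns; hence $\bm{U}^T\bm{J}\bm{U}$ is $(n-m)\times(n-m)$. Requirement 2 says this matrix is nonsingular, so $\mathsf{rank}(\bm{U}^T\bm{J}\bm{U}) = n-m$. On the other hand, the rank of a product never exceeds the rank of any factor, so $\mathsf{rank}(\bm{U}^T\bm{J}\bm{U}) \le \mathsf{rank}\ \bm{J}$. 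Combining these gives $n - m \le \mathsf{rank}\ \bm{J}$, i.e. $m \ge n - \mathsf{rank}\ \bm{J}$, which is the desired bound.

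For achievability I would invoke Theorem \ref{thm_pseudoinverse_constrained_crlb}. The affine constraint $\bm{f}(\bm{\theta}) = \overline{\bm{U}}_r^T\bm{\theta} + \bm{C}$ has Jacobian $\partial\bm{f}/\partial\bm{\theta}^T = \overline{\bm{U}}_r^T$, an $(n-r)\times n$ matrix with orthonormal rows, hence of full rank $n - r = n - \mathsf{rank}\ \bm{J}$ (requirement 1 is met); taking $\bm{U} = \bm{U}_r$ satisfies $\overline{\bm{U}}_r^T\bm{U}_r = \bm{0}$, and $\bm{U}_r^T\bm{J}\bm{U}_r = \bm{\Sigma}$ is nonsingular (requirement 2 is met). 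Thus this $\bm{f}$ realizes $\mathsf{rank}\ \partial\bm{f}/\partial\bm{\theta}^T = n - \mathsf{rank}\ \bm{J}$, and together with the lower bound the minimum equals $n - \mathsf{rank}\ \bm{J}$.

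The argument is essentially elementary, so there is no serious obstacle; the one point that must be handled carefully is the bookkeeping of dimensions in the lower bound — namely that $\bm{U}$ has $n-m$ columns where $m$ is the \emph{rank} of the Jacobian rather than merely its number of rows. This is exactly what requirement 1 guarantees, and without it one would instead have to phrase the argument in terms of the nullity of $\partial\bm{f}/\partial\bm{\theta}^T$.
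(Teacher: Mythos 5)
Your proposal is correct and follows essentially the same route as the paper: the paper proves the lower bound by contradiction (assuming $\mathsf{rank}\ \partial\bm{f}/\partial\bm{\theta}^T < n - \mathsf{rank}\ \bm{J}$ forces $\bm{U}$ to have more columns than $\mathsf{rank}\ \bm{J}$, so $\bm{U}^T\bm{J}\bm{U}$ cannot be full rank), which is just the contrapositive of your direct argument via $\mathsf{rank}(\bm{U}^T\bm{J}\bm{U}) \le \mathsf{rank}\ \bm{J}$. The achievability step is identical, using the affine constraint of Theorem \ref{thm_pseudoinverse_constrained_crlb}.
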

\begin{proof}
First we show that in order to satisfy the first and the second requirements, 
\begin{equation}
\mathsf{rank}\ \frac{\partial \bm{f}}{\partial \bm{\theta}^T} \geq n - \mathsf{rank}\ \bm{J}, \label{eq_min_rank_F}
\end{equation}
and then we show that the equality is achievable.

If 
\begin{equation}
\mathsf{rank}\ \frac{\partial \bm{f}}{\partial \bm{\theta}^T} < n - \mathsf{rank}\ \bm{J}, \notag
\end{equation}
by the definition of $\bm{U}$ (see Theorem \ref{thm_constrained_crlb}), $\bm{U}$ is a $n$-by-$(\mathsf{rank}\ \bm{U})$ matrix with
\begin{equation} 
n \geq \mathsf{rank}\ \bm{U} > \mathsf{rank}\ \bm{J}. \label{eq_n_rankU_rankJ}
\end{equation}

By the fact that
\begin{align}
\mathsf{rank}\ \bm{\bm{U}^T \bm{J} \bm{U}} &\leq \min \{ \mathsf{rank}\ \bm{U}, \mathsf{rank}\ \bm{J} \} \notag \\
                                           &\leq \mathsf{rank}\ \bm{J} < \mathsf{rank}\ \bm{U},  \notag
\end{align}
where the last inequality follows by (\ref{eq_n_rankU_rankJ}), and noting that $\bm{U}^T \bm{J} \bm{U}$ is a $(\mathsf{rank}\ \bm{U})$-by-$(\mathsf{rank}\ \bm{U})$ square matrix, $\bm{U}^T \bm{J} \bm{U}$ cannot be full-rank. Thus (\ref{eq_min_rank_F}) is proved.

The achievability of equality in (\ref{eq_min_rank_F}) is easy to prove. Choose the constraint function $\bm{f}$ as in (\ref{eq_exp_min_constraint_func}), and we can see such a constraint function satisfies all of the requirements of a minimum constraint function.
\end{proof}

By the above theorem we can see the third requirement is in fact requiring $\partial \bm{f} / \partial \bm{\theta}^T$ to have the minimum rank. The reason why such a constraint function $\bm{f}$ can be considered as the constraint function with \emph{minimum constraints} can be found by the following theorem.

\begin{theorem}
Let $A\subset \mathbb{R}^n$ be open and let $\bm{f}: A\to\mathbb{R}^m$, $m \leq n$, be a differentiable function such that $\partial \bm{f} / \partial \bm{\theta}^T$ has rank $m$ whenever $\bm{f}( \bm{x} ) = \bm{0}$. Then $\bm{f}( \bm{x} ) = \bm{0}$ implicitly defines an $(n-m)$-dimensional manifold in $\mathbb{R}^n$.
\end{theorem}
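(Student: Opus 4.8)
The plan is to prove this by a direct application of the Implicit Function Theorem, producing a local parametrization of the solution set near each of its points; since exhibiting a neighbourhood of every point as a homeomorphic copy of an open subset of $\mathbb{R}^{n-m}$ (with differentiable transition maps) is precisely the definition of an $(n-m)$-dimensional manifold, this suffices.

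First I would introduce the zero set $M \triangleq \{ \bm{x} \in A : \bm{f}(\bm{x}) = \bm{0} \}$ and fix an arbitrary $\bm{x}_0 \in M$. Because $\partial \bm{f} / \partial \bm{\theta}^T$ has rank $m$ at $\bm{x}_0$, some $m$ of its columns are linearly independent; after a permutation of the coordinates of $\mathbb{R}^n$ (a diffeomorphism, so it does not affect the conclusion) I may assume these are the first $m$. Then I split the parameter as $\bm{\theta} = (\bm{\theta}_1, \bm{\theta}_2)$ with $\bm{\theta}_1 \in \mathbb{R}^m$ and $\bm{\theta}_2 \in \mathbb{R}^{n-m}$, so that the square block $\partial \bm{f} / \partial \bm{\theta}_1^T$ is invertible at $\bm{x}_0$.

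The key step is then the Implicit Function Theorem: it yields open neighbourhoods $V \ni \bm{\theta}_1^0$, $W \ni \bm{\theta}_2^0$ with $V \times W \subset A$ and a differentiable map $\bm{g} : W \to V$ such that, on $V \times W$, one has $\bm{f}(\bm{\theta}_1, \bm{\theta}_2) = \bm{0}$ if and only if $\bm{\theta}_1 = \bm{g}(\bm{\theta}_2)$. Hence $M \cap (V \times W)$ is the graph of $\bm{g}$, and the map $\bm{\varphi}(\bm{\theta}_2) \triangleq (\bm{g}(\bm{\theta}_2), \bm{\theta}_2)$ is a differentiable bijection from $W$ onto $M \cap (V \times W)$ whose inverse is the restriction of the coordinate projection $(\bm{\theta}_1, \bm{\theta}_2) \mapsto \bm{\theta}_2$, hence continuous. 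This $\bm{\varphi}$ is a chart identifying a neighbourhood of $\bm{x}_0$ in $M$ with the open set $W \subset \mathbb{R}^{n-m}$. Since $\bm{x}_0$ was arbitrary, such charts cover $M$; on overlaps the transition maps are compositions of the coordinate projection with the relevant $\bm{g}$'s, hence differentiable, so $M$ is an $(n-m)$-dimensional differentiable manifold.

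I do not anticipate a genuine obstacle: this is the classical regular-value (submersion) theorem, and one could even dispatch it simply by citing the Implicit Function Theorem or the Preimage Theorem. The only points worth a line of care are the harmless coordinate permutation that places an invertible $m \times m$ block in the upper-left corner, and the remark that the inverse of the constructed chart is continuous because it is a coordinate projection. Writing the argument out in this way also makes the dimension count $n - m = n - \mathsf{rank}\,(\partial \bm{f}/\partial\bm{\theta}^T)$ transparent, which is exactly the conclusion the surrounding text needs in order to call $\bm{f}$ a \emph{minimum} constraint.
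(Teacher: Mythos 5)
Your proposal is correct and is exactly the standard implicit-function-theorem argument; the paper itself does not write out a proof but simply cites Spivak's \emph{Calculus on Manifolds}, where this same regular-value argument appears. You have merely made explicit the proof the paper delegates to the reference, so there is nothing to reconcile.
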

\begin{proof}
See \cite{Spivak1965}.
\end{proof}

Constraint functions $\bm{f}$ with the minimum $\mathsf{rank}\ \partial \bm{f} / \partial \bm{\theta}^T$ ensures that the resulting manifolds have the maximal degree of freedom, so we call them minimum constraint functions.

\subsection{$\bm{J}^{\dagger}$ is the CRB corresponding to the minimum variance among all choices of minimum constraint functions.}
In this sub-section, we prove the claim that $\bm{J}^{\dagger}$ is the CRB corresponding to the minimum variance among all choices of minimum constraint functions. For convenience, the $i$th largest eigenvalue of a matrix $\bm{M}$ is denoted by $\lambda_i( \bm{M} )$ in the following discussions.

The main result of this subsection is the following theorem.

\begin{theorem} \label{thm_min_crb}
In Theorem \ref{thm_constrained_crlb}, if $\bm{f}$ is a minimum constraint function, then 
\begin{equation}
\mathsf{tr}\left( \mathsf{cov}\left[ \hat{\bm{\theta}}, \hat{\bm{\theta}} \right] \right) \geq \mathsf{tr} \left( \bm{J}^\dagger \right). \notag
\end{equation}
Furthermore, equality can be achieved by choosing the constraint function $\bm{f}$ as in Theorem \ref{thm_pseudoinverse_constrained_crlb}.
\end{theorem}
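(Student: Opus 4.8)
The plan is to take traces in Theorem~\ref{thm_constrained_crlb}, reduce the resulting matrix inequality to a scalar inequality between sums of reciprocal eigenvalues, and then settle that inequality with the Poincar\'{e} separation (Cauchy interlacing) theorem. First I would record what Definition~\ref{def_min_constraint} buys us: by the rank requirement, $\mathsf{rank}\,(\partial \bm{f}/\partial\bm{\theta}^T) = n - r$ with $r \triangleq \mathsf{rank}\,\bm{J}$, so the matrix $\bm{U}$ of Theorem~\ref{thm_constrained_crlb} has exactly $r$ orthonormal columns; in particular $\bm{U}^T \bm{U} = \bm{I}_r$, and $\bm{M} \triangleq \bm{U}^T \bm{J} \bm{U}$ is $r \times r$ and positive definite (the latter from the second requirement). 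Since $\bm{A} \geq \bm{B}$ implies $\mathsf{tr}\,\bm{A} \geq \mathsf{tr}\,\bm{B}$, and using the cyclic property of the trace together with $\bm{U}^T \bm{U} = \bm{I}_r$, Theorem~\ref{thm_constrained_crlb} gives $\mathsf{tr}\big(\mathsf{cov}[\hat{\bm{\theta}},\hat{\bm{\theta}}]\big) \geq \mathsf{tr}\big(\bm{U}\bm{M}^{-1}\bm{U}^T\big) = \mathsf{tr}\big(\bm{M}^{-1}\big)$. On the other hand, Lemma~\ref{lem_pseudo_inverse_by_Us} and the SVD (\ref{eq_svd_J}) give $\bm{J}^\dagger = \bm{U}_r \bm{\Sigma}^{-1} \bm{U}_r^T$, so $\mathsf{tr}(\bm{J}^\dagger) = \mathsf{tr}(\bm{\Sigma}^{-1}) = \sum_{i=1}^r \lambda_i(\bm{J})^{-1}$, the sum running over the nonzero eigenvalues of $\bm{J}$. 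The theorem thus reduces to the claim $\mathsf{tr}(\bm{M}^{-1}) \geq \sum_{i=1}^r \lambda_i(\bm{J})^{-1}$. (Note it is essential that the statement is about traces: the Loewner inequality $\bm{U}\bm{M}^{-1}\bm{U}^T \geq \bm{J}^\dagger$ does not hold in general, since the two sides have different ranges.)

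The heart of the argument, and the step I expect to be the main obstacle, is comparing the eigenvalues of $\bm{J}$ with those of its compression $\bm{M} = \bm{U}^T \bm{J} \bm{U}$. By the Courant--Fischer min-max characterization, $\lambda_i(\bm{M}) = \max_{\dim S = i}\, \min_{\bm{x}\in S,\ \|\bm{x}\|=1} \bm{x}^T \bm{M}\bm{x} = \max_{\dim S = i}\, \min_{\bm{x}\in S,\ \|\bm{x}\|=1} (\bm{U}\bm{x})^T \bm{J}(\bm{U}\bm{x})$; since $\bm{U}$ has orthonormal columns, the vectors $\bm{U}\bm{x}$ are again unit vectors and the subspaces $\bm{U}S$ range over the $i$-dimensional subspaces of $\mathrm{range}\,\bm{U}$, a subfamily of all $i$-dimensional subspaces of $\mathbb{R}^n$, whence $\lambda_i(\bm{M}) \leq \lambda_i(\bm{J})$ for $i = 1,\dots,r$ (this is the Poincar\'{e} separation theorem). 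Because $\bm{M}$ is positive definite, $0 < \lambda_i(\bm{M}) \leq \lambda_i(\bm{J})$, so $\lambda_i(\bm{M})^{-1} \geq \lambda_i(\bm{J})^{-1}$, and summing over $i = 1,\dots,r$ yields $\mathsf{tr}(\bm{M}^{-1}) \geq \sum_{i=1}^r \lambda_i(\bm{J})^{-1} = \mathsf{tr}(\bm{J}^\dagger)$, as desired.

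For the achievability claim I would take $\bm{f}$ as in Theorem~\ref{thm_pseudoinverse_constrained_crlb}, namely $\bm{f}(\bm{\theta}) = \overline{\bm{U}}_r^T \bm{\theta} + \bm{C}$. Then $\partial\bm{f}/\partial\bm{\theta}^T = \overline{\bm{U}}_r^T$ has full rank $n-r$, one may take $\bm{U} = \bm{U}_r$ (whose columns span the null space of $\overline{\bm{U}}_r^T$), $\bm{U}^T\bm{J}\bm{U} = \bm{\Sigma}$ is nonsingular, and $\mathsf{rank}\,(\partial\bm{f}/\partial\bm{\theta}^T) + \mathsf{rank}\,\bm{J} = (n-r) + r = n$, so all three requirements of Definition~\ref{def_min_constraint} are met. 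For this choice the constrained CRB of Theorem~\ref{thm_constrained_crlb} is $\bm{U}_r \bm{\Sigma}^{-1} \bm{U}_r^T = \bm{J}^\dagger$ by Lemma~\ref{lem_pseudo_inverse_by_Us}, whose trace is exactly $\mathsf{tr}(\bm{J}^\dagger)$, and equality in the variance bound is attained by the estimator exhibited in Theorem~\ref{thm_constrained_crlb}. A remaining loose end worth checking is that the interlacing step used only $\bm{U}^T\bm{U}=\bm{I}_r$ (hence $r = \mathsf{rank}\,\bm{J}$) and positive definiteness of $\bm{M}$, both guaranteed by Definition~\ref{def_min_constraint}, so no hidden nondegeneracy assumption on $\bm{f}$ beyond being a minimum constraint function is needed.
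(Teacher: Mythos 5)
Your proof is correct and follows essentially the same route as the paper: both reduce the claim to comparing the eigenvalues of $\bm{J}$ with those of the compression $\bm{U}^T\bm{J}\bm{U}$ via the Poincar\'{e} separation theorem, and both verify achievability with the affine constraint of Theorem \ref{thm_pseudoinverse_constrained_crlb}. The only (cosmetic) difference is that the paper first establishes the eigenvalue-by-eigenvalue inequality $\lambda_i(\bm{U}(\bm{U}^T\bm{J}\bm{U})^{-1}\bm{U}^T) \geq \lambda_i(\bm{J}^\dagger)$ as a standalone theorem and then sums, whereas you pass to traces immediately.
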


Note that the trace of a covariance matrix is the sum of the variances of the elements of $\hat{ \bm{\theta} }$. In this way, we have proved that the Moore-Penrose generalized inverse of the FIM is the CRB corresponding to the minimum variance among all choices of minimum constraint functions.

Theorem \ref{thm_min_crb} is in fact a corollary of the following theorem.

\begin{theorem} \label{thm_min_is_J_dagger}
Let the SVD of an $n$-by-$n$ nonnegative definite matrix $\bm{J}$ with rank $r$ be 
\begin{equation}
\bm{J} = \left[ \begin{array}{cc} \bm{U}_r & \overline{\bm{U}}_r \end{array} \right] \left[ \begin{array}{cc} \bm{\Sigma} & \bm{0} \\ \bm{0} & \bm{0}  \end{array} \right] \left[ \begin{array}{c} \bm{U}_r^T \\ \overline{\bm{U}}_r^T \end{array} \right],  \notag
\end{equation}
where $\bm{\Sigma}$ is a diagonal matrix with nonnegative diagonal elements. Then
\begin{align}
\lambda_i( \bm{V} \left( \bm{V}^T \bm{J} \bm{V} \right)^{-1} \bm{V}^T ) 
&\geq \lambda_i( \bm{U}_r \left( \bm{U}_r^T \bm{J} \bm{U}_r \right)^{-1} \bm{U}_r^T ) \notag \\
&= \lambda_i( \bm{J}^\dagger ) \quad \forall i \notag
\end{align}
for any matrix $\bm{V}$ with the same size as $\bm{U}_r$ such that $\bm{V}^T \bm{V} = \bm{I}$.
\end{theorem}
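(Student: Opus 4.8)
The plan is to reduce the asserted system of eigenvalue inequalities to the classical interlacing (Poincar\'{e} separation) theorem for compressions of a Hermitian matrix, using that the nonzero spectrum of a matrix of the form $\bm{V}(\bm{V}^T\bm{J}\bm{V})^{-1}\bm{V}^T$ is governed by that of the small matrix $\bm{V}^T\bm{J}\bm{V}$.

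First I would dispose of the trivial directions. Both $\bm{V}\left(\bm{V}^T\bm{J}\bm{V}\right)^{-1}\bm{V}^T$ and $\bm{U}_r\left(\bm{U}_r^T\bm{J}\bm{U}_r\right)^{-1}\bm{U}_r^T$ are $n$-by-$n$ nonnegative definite matrices of rank $r$: the inner $r$-by-$r$ blocks are positive definite because $\bm{J}\geq\bm{0}$ and (implicitly) $\bm{V}^T\bm{J}\bm{V}$ is invertible, while $\bm{V}$ and $\bm{U}_r$ have full column rank $r$. Hence each of these matrices has exactly $n-r$ zero eigenvalues, so for every index $i>r$ the claimed inequality is just $0\geq 0$. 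For $i\leq r$ I would exploit $\bm{V}^T\bm{V}=\bm{I}$ (resp. $\bm{U}_r^T\bm{U}_r=\bm{I}$) together with the fact that $\bm{A}\bm{B}$ and $\bm{B}\bm{A}$ share the same nonzero eigenvalues with multiplicities: taking $\bm{A}=\bm{V}$, $\bm{B}=\left(\bm{V}^T\bm{J}\bm{V}\right)^{-1}\bm{V}^T$ shows that the nonzero eigenvalues of $\bm{V}\left(\bm{V}^T\bm{J}\bm{V}\right)^{-1}\bm{V}^T$ are exactly the eigenvalues of $\left(\bm{V}^T\bm{J}\bm{V}\right)^{-1}$, i.e. the reciprocals of those of $\bm{V}^T\bm{J}\bm{V}$; similarly the nonzero eigenvalues of $\bm{U}_r\left(\bm{U}_r^T\bm{J}\bm{U}_r\right)^{-1}\bm{U}_r^T$ are the reciprocals of the eigenvalues of $\bm{U}_r^T\bm{J}\bm{U}_r=\bm{\Sigma}$, which are precisely the $r$ positive eigenvalues of $\bm{J}$.

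Since reciprocation reverses the ordering, for $1\leq i\leq r$ the inequality $\lambda_i\big(\bm{V}(\bm{V}^T\bm{J}\bm{V})^{-1}\bm{V}^T\big)\geq\lambda_i\big(\bm{U}_r(\bm{U}_r^T\bm{J}\bm{U}_r)^{-1}\bm{U}_r^T\big)$ is equivalent, after setting $j=r+1-i$, to
\[
\lambda_j\!\left(\bm{V}^T\bm{J}\bm{V}\right)\ \leq\ \lambda_j(\bm{J}),\qquad j=1,\dots,r.
\]
This is the upper half of the Poincar\'{e} separation theorem for the $r$-by-$r$ compression $\bm{V}^T\bm{J}\bm{V}$ of $\bm{J}$ along the orthonormal frame $\bm{V}$; I would either cite it or derive it in one line from the Courant--Fischer max-min formula: because $\|\bm{V}\bm{y}\|=\|\bm{y}\|$, $\lambda_j\!\left(\bm{V}^T\bm{J}\bm{V}\right)=\max_{\dim T=j}\ \min_{\bm{0}\neq\bm{y}\in T}\ \frac{(\bm{V}\bm{y})^T\bm{J}(\bm{V}\bm{y})}{(\bm{V}\bm{y})^T(\bm{V}\bm{y})}$, which is a maximum taken only over the $j$-dimensional subspaces of $\mathbb{R}^n$ of the form $\bm{V}T$ and therefore cannot exceed $\lambda_j(\bm{J})$. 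The remaining assertion $\lambda_i\big(\bm{U}_r(\bm{U}_r^T\bm{J}\bm{U}_r)^{-1}\bm{U}_r^T\big)=\lambda_i(\bm{J}^\dagger)$ is immediate from Lemma \ref{lem_pseudo_inverse_by_Us}.

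I do not expect a substantial obstacle here; the only points needing care are the index bookkeeping induced by the order reversal under reciprocation, so that the $i$-th largest eigenvalue on one side is matched with the $(r+1-i)$-th largest on the other, and the standing hypothesis that $\bm{V}^T\bm{J}\bm{V}$ is nonsingular --- which is exactly the condition under which the constrained CRB exists and the statement is meaningful --- so that, being nonnegative definite with no zero eigenvalue, $\bm{V}^T\bm{J}\bm{V}$ is positive definite and all the reciprocals above are legitimate.
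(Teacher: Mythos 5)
Your proposal is correct and follows essentially the same route as the paper's appendix: dispose of the $n-r$ zero eigenvalues, identify the nonzero eigenvalues of $\bm{V}\left(\bm{V}^T\bm{J}\bm{V}\right)^{-1}\bm{V}^T$ with those of $\left(\bm{V}^T\bm{J}\bm{V}\right)^{-1}$, invert to reduce the claim to $\lambda_j\left(\bm{V}^T\bm{J}\bm{V}\right)\leq\lambda_j(\bm{J})$, and conclude by the Poincar\'{e} separation theorem. Your treatment is in fact slightly more careful than the paper's, since you make the index reversal $j=r+1-i$ and the standing nonsingularity assumption on $\bm{V}^T\bm{J}\bm{V}$ explicit.
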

\begin{proof}
See Appendix.
\end{proof}

If the above theorem holds, then Theorem \ref{thm_min_crb} can be proved as follows.

\begin{proof}[Proof for Theorem \ref{thm_min_crb}]
Note that the FIM $\bm{J}$ is nonnegative definite, and the resulting $\bm{U}$ (see Theorem \ref{thm_constrained_crlb}) for every minimum constraint function $\bm{f}$ should have the same size as $\bm{U_r}$ in Theorem \ref{thm_min_is_J_dagger}, so the above theorem applies. Noting that $\bm{U}_r \left( \bm{U}_r^T \bm{J} \bm{U}_r \right)^{-1} \bm{U}_r^T = \bm{J}^{\dagger}$ according to Lemma \ref{lem_pseudo_inverse_by_Us}, the theorem follows because trace equals to the sum of eigenvalues.
\end{proof}

\begin{rem}
One may expect that the inequality
\begin{equation}
\bm{V}( \bm{V}^T \bm{J} \bm{V} )^{-1} \bm{V}^T \geq \bm{J}^\dagger \notag
\end{equation}
holds for matrices $\bm{V}$ and $\bm{J}$ defined as in Theorem \ref{thm_min_is_J_dagger}, but in general this matrix inequality does not hold. A counterexample is when 
\begin{equation}
\bm{J} = \left[ \begin{array}{cccc}
1 & 0 & 0 & 0 \\
0 & 1 & 0 & 0 \\
0 & 0 & 0 & 0 \\
0 & 0 & 0 & 0
\end{array} \right], 
\bm{V} = \frac{1}{2} \left[ \begin{array}{rr}
-1 & 1 \\
-1 & -1 \\
-1 & 1 \\
-1 & -1
\end{array} \right]. \notag
\end{equation}
\end{rem}

\section{Conclusions and Discussions} \label{section_conclusions}
We have proved the main theorem in this paper: The Moore-Penrose generalized inverse of a singular FIM is the CRB corresponding to the minimum variance among all choices of minimum constraint functions. According to the theorem, the logical validity of using the Moore-Penrose generalized inverse of a singular FIM as a CRB is justified, and a CRB for the joint design of the unbiased estimator and the constraint function is obtained. In addition to a performance bound, we also derive the sufficient condition for a constraint function to achieve the bound, which is an affine function of the parameter to be estimated. The above results facilitate future researches on the optimal joint design of constraint functions and unbiased estimators.

One possible extension of this study is to generalize the concept of a minimum constraint function to higher dimensional cases. To be more precise, it may be possible to consider the minimum CRB when the rank of the constraint function $\bm{f}$ is larger than $( n - \mathsf{rank}\bm{J} )$ (cf. Definition \ref{def_min_constraint}). This extension may be of practical interest because the CRB, if derived, could be useful in the study of semi-blind channel estimation problems, where more than one pilot symbols exist \cite{Tong2004}.

\section*{Acknowledgements}
The authors would like to thank anonymous reviewers for their informative comments.

\appendix \label{app_min_is_J_dagger}
Observing that
\begin{align}
\lambda_i \left(\bm{V} \left( \bm{V}^T \bm{J} \bm{V} \right)^{-1} \bm{V}^T \right) = \lambda_i \left( \bm{U}_r \left( \bm{U}_r^T \bm{J} \bm{U}_r \right)^{-1} \bm{U}_r^T \right) = 0 \notag
\end{align}
for all $i\in \{ r+1, r+2, \ldots, n \}$, and 
\begin{gather}
\lambda_i \left(\bm{V} \left( \bm{V}^T \bm{J} \bm{V} \right)^{-1} \bm{V}^T \right) = \lambda_i \left(\left( \bm{V}^T \bm{J} \bm{V} \right)^{-1} \right), \notag \\
\lambda_i \left( \bm{U}_r \left( \bm{U}_r^T \bm{J} \bm{U}_r \right)^{-1} \bm{U}_r^T \right) = \lambda_i \left( \left( \bm{U}_r^T \bm{J} \bm{U}_r \right)^{-1} \right) \notag,
\end{gather}
for all $i\in\{ 1, 2, \ldots, r \}$, it suffices to prove 
\begin{equation}
\lambda_j \left(\left( \bm{V}^T \bm{J} \bm{V} \right)^{-1} \right) \geq \lambda_j \left( \left( \bm{U}_s^T \bm{J} \bm{U}_s \right)^{-1} \right) \notag
\end{equation}
for all $j \in \{ 1, 2, \ldots, r \}$, or equivalently, 
\begin{equation}
\lambda_k \left( \bm{V}^T \bm{J} \bm{V} \right) \leq \lambda_k \left( \bm{U}_r^T \bm{J} \bm{U}_r \right) \label{eq_result_of_poincare}
\end{equation}
for all $k\in \{ 1, 2, \ldots, r \}$.

Noting that $\lambda_i \left( \bm{U}_r^T \bm{J} \bm{U}_r \right) = \lambda_i \left( \bm{J} \right)$ because they have the same first $r$ eigenvalues, and by the fact that an FIM is always Hermitian, we can see (\ref{eq_result_of_poincare}) is just a result of Poincar\'{e} separation theorem \cite{Horn1985}. Therefore the theorem follows.

\bibliographystyle{IEEEtran}
\bibliography{list}

\end{document}